\acrodef{AoI}{Age of information}
\acrodef{PAoI}{peak age of information}
\acrodef{BS}{base station}
\acrodef{HLF}{Hyperledger Fabric}
\acrodef{HPPP} {homogeneous Poisson Point Process}
\acrodef{HeMN}{Hyperledger Fabric Blockchain-enabled monitoring network}
\acrodef{MVCC}{multi-version concurrency check}
\acrodef{STP}{successful transmission probability}
\acrodef{FCFS}{first-come-first-serve}
\acrodef{pdf}{probability density function}
\acrodef{i.i.d.}{independent and identically distributed}
\acrodef{CDF}{cumulative distribution function}
\newtheorem{theorem}{Theorem}
\newtheorem{lemma}{Lemma}
\newcommand{\E}{\mathbb{E}}
\renewcommand{\P}{\mathbb{P}}
\newcommand{\integral}[2]{\int_{#1}^{#2}}
\newcommand{\summation}[2]{\sum_{#1}^{#2}}
\begin{document}
	\newcommand{\paperTitle}{Title}



\title{\vspace{-0.1cm} Age of Information Analysis in Hyperledger Fabric \\
	Blockchain-enabled Monitoring Networks}

\author{
	
\IEEEauthorblockN{
	Minsu~Kim, Sungho~Lee, Chanwon~Park, and 
	Jemin~Lee
}\\[-0.5em]
\vspace{-3.0mm}
\IEEEauthorblockA{
	Department of Information and Communication Engineering (ICE)\\
	Daegu Gyeongbuk Institute of Science and Technology (DGIST), Korea\\
	\vspace{-3.0mm}
	 Email: {kms0603@dgist.ac.kr}, {seuho2003@dgist.ac.kr}, {pcw0311@dgist.ac.kr},      {jmnlee@dgist.ac.kr}
}
%

}

\maketitle 

%

%

%

%

\acresetall
	
	\begin{abstract}

	\ac{AoI} is a recently proposed metric for quantifying data freshness in real-time status monitoring systems where timeliness is of importance. In this paper, we explore the data freshness in the Hyperledger Fabric Blockchain-enabled monitoring network (HeMN) by leveraging the AoI metric. In HeMN, status updates from sources are transmitted through an uplink and recorded in a Hyperledger Fabric (HLF) network. To provide a stochastic guarantee of data freshness, we derive a closed-form of AoI violation probability by considering the transmission latency and the consensus latency. Then, we validate our analytic results through the implemented HLF platform. We also investigate the effect of the target successful transmission probability (STP) on the AoI violation probability.	
	\end{abstract}

	\section{Introduction}
	Real-time status monitoring systems have been deployed worldwide for time-critical applications such as traffic management and pollution detection. These systems usually consist of a source and a destination. Specifically, the source generates a status update (or packet) with observed information and transmits it to the destination where status updates (or packets) are received and recorded. In such cases, decisions are made based on the observed information from the source, but outdated data may lead to incorrect decisions.
	Therefore, it is of importance for real-time status monitoring systems to provide fresh data, in order to prevent undesirable outputs. To quantify data freshness, \ac{AoI} has been proposed in \cite{SaRoMa:12}. It is defined as the elapsed time from the generation of the latest received status updating packet.

	The \ac{AoI} has been analyzed for real-time IoT applications in order to capture the timeliness of monitoring data \cite{MAHD:19, XSZN:19, YGBV:19}. In \cite{MAHD:19}, the optimal sampling policy is obtained to minimize the weighted sum-\ac{AoI} in energy harvesting enabled IoT networks. In \cite{XSZN:19}, the average \ac{AoI} penalty function of an energy harvesting IoT system is obtained and the status update frequency is optimized to minimize the average \ac{AoI} penalty function. An \ac{AoI}-energy trade-off is studied for IoT monitoring systems and the average \ac{AoI} is minimized by optimizing the transmission power of an IoT device in \cite{YGBV:19}. 
	However, all of the previous papers did not consider the case when information is stored in a blockchain, which cannot guarantee data integrity.
	
    Recently, a blockchain has been regarded as a promising decentralized data management platform for IoT devices aiming to eliminate the need for a central authority \cite{LeDo:19 ,PDPP:19, YSMAI:19}. A blockchain integrated IoT platform is presented for real-time monitoring to provide data integrity \cite{LeDo:19}. In \cite{PDPP:19}, the communication cost of periodic updates is analyzed in Ethereum blockchain for lightweight IoT devices, which only store the head of blocks. For the blockchain-enabled wireless IoT network, the optimal deployment of the nodes for maximizing transaction throughput is investigated in \cite{YSMAI:19}. However, data freshness in blockchain platforms is not analyzed in the previous papers \cite{LeDo:19, PDPP:19, YSMAI:19}, which fails to show the timeliness of stored data.
    
    A blockchain may be selected as the underlying system of an alarm system (e.g., pollution or fire detection) in order to ensure data integrity. However, this system requires not only the integrity of monitored data but also the timeliness of it. If any of the two properties are not satisfied, the system may result in the wrong decision. 
    Therefore, we analyze the data freshness in a \ac{HLF} platform, which is one of the most utilized permissioned blockchain platforms. In \ac{HLF}, the latest monitored information is stored in a distributed ledger with its history, which can provide data integrity.
    
    In this paper, we investigate the data freshness in the \ac{HeMN}, where sources observe physical phenomena and update their status in a \ac{HLF} network, which is connected at \acp{BS}. Sources and \ac{BS}s are distributed randomly on the network, and each source transmits a packet to its nearest \ac{BS}. To measure data freshness in \ac{HeMN}, the \ac{AoI} is utilized. We also analyze the \ac{AoI} violation probability, which shows the probability that the target \ac{AoI} is guaranteed. Our main contributions are as follows: $1)$ we derive the \ac{AoI} violation probability in a closed-form by considering the consensus latency as well as the transmission latency, $2)$ we also obtain the experiment results by implementing the \ac{HLF} platform v1.3 to verify analysis results, and $3)$ we explore the effect of the target \ac{STP} on the \ac{AoI} violation probability of \ac{HeMN}. 
    
	

	\section{\ac{HLF} Transaction Flow}
	\label{sec: transaction flow}
	
	In this section, we provide the overall structure of \ac{HLF} and the components of the consensus process for a status update. In \ac{HLF}, all changes made by transactions are committed to a distributed ledger. The ledger is a key-value database owned by peers in \ac{HLF}, and transactions can update stored data with each corresponding key. Hence, all data is identified by each own key and a version number. In \ac{HLF}, participants are all identified. Therefore, the costly consensus method in a public blockchain, which is known as mining, is not necessary for \ac{HLF}. Instead, the consensus process in \ac{HLF} is composed of three phases: endorsing phase, ordering phase, and validation phase as described below. More detailed explanations on each phase are available in \cite{SLMKJL:20} 
	\subsubsection{Endorsing Phase}
	The endorsing phase is to receive endorsements from the peers, which are entitled to simulate transactions against their own copied ledgers. The peers make sure that they have the exactly identical transaction simulation results. Then the transaction with the endorsements is transmitted to the ordering node.	
	
	\subsubsection{Ordering Phase}
	The ordering phase is not only to arrange transactions in chronological order but also to generate new blocks with the ordered transactions. The ordering nodes continuously include transactions into a new block until it reaches the pre-defined maximum block size. To avoid high latency, a timer is prepared with the pre-defined timeout value. If the timer expires, the nodes instantly export the new block, regardless of the current number of transactions in the block.
	
	\subsubsection{Validation Phase}
	The validation phase is to validate newly delivered blocks to the peers. This phase consists of two sequential steps: verification and update. Firstly, the peers investigate if each transaction is properly endorsed according to the endorsement phase. Then, the peers check whether the key versions are identical to the ones currently stored in their copied ledgers. This verification is also called the \ac{MVCC} verification. Note that the key version increases each time the corresponding status is updated. In case the version numbers are different, the transaction is marked as invalid as well as become ineffective.  Lastly, the peers update the status retained in their ledger.
	
	\begin{figure}
			\vspace{-0.cm}
			\includegraphics[width=0.93\columnwidth]{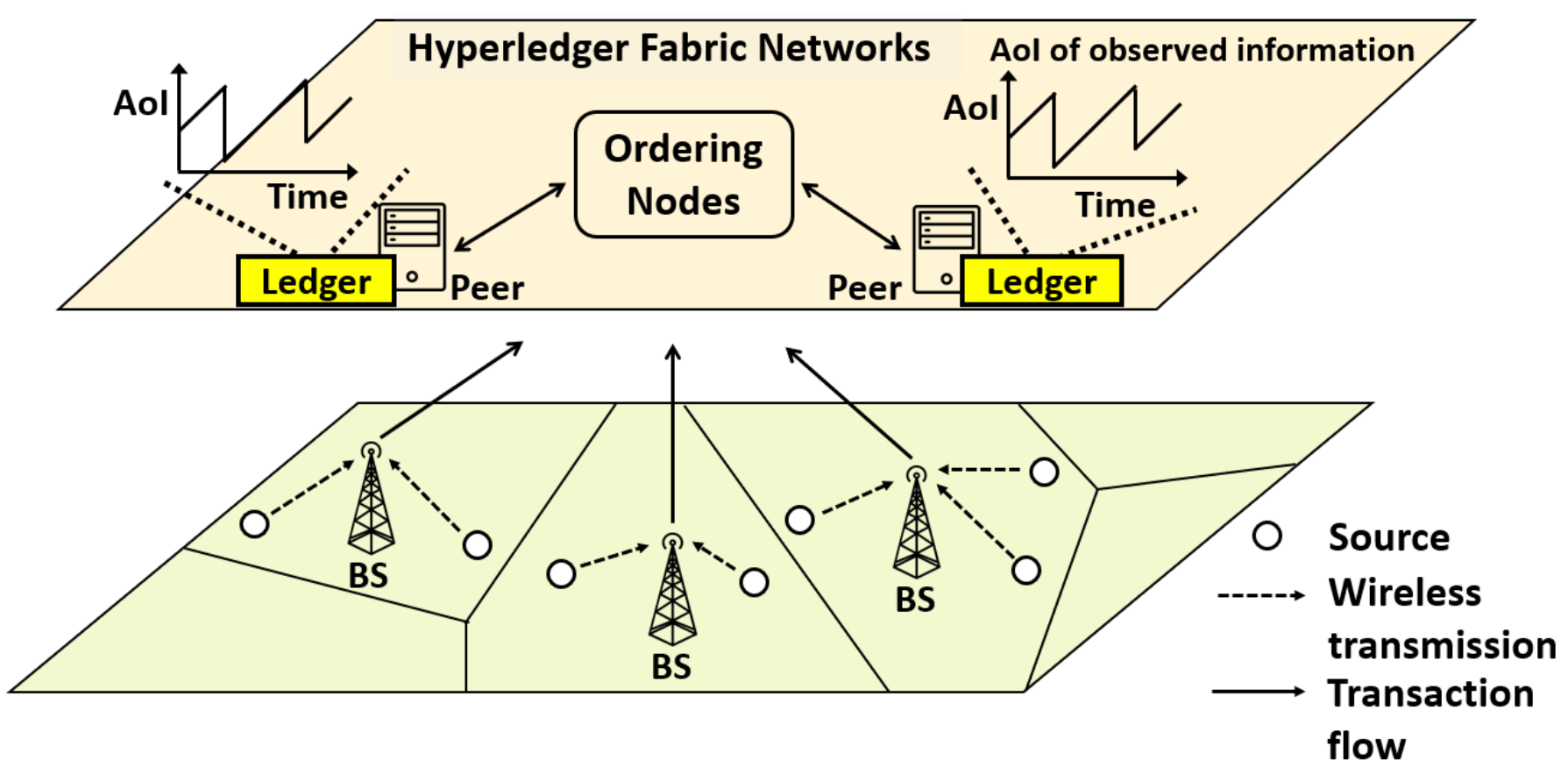}
 			\captionsetup {singlelinecheck = false}

		\caption{An illustration of \ac{HeMN}  }
		\label{fig:overall_sysmtem_model}

		\vspace{-0.4cm}
	\end{figure}

	\section{HLF Blockchain-Enabled Monitoring Network}

	\subsection{System Model}
	We consider the \ac{HeMN} composed of sources, \acp{BS}, and \ac{HLF} network, where sources monitor physical phenomena (e.g., temperature or pollution level) and update the corresponding status stored in \ac{HLF} as shown in Fig. \ref{fig:overall_sysmtem_model}. We assume the distribution of the sources follows a \ac{HPPP} $\Phi_s$ with spatial density $\lambda_s$. The source transmits a packet through a wireless uplink channel to the nearest \ac{BS}. We assume that all the sources use the same transmission power $P$. The distribution of \ac{BS}s also follows a \ac{HPPP} $\Phi_m$ with spatial density $\lambda$. Each channel is allocated to one source only in the cell of a \ac{BS} to avoid the interference between the sources in the same cell.
	%
	%
	%
	%
	
	We assume \ac{BS}s are connected to \ac{HLF} network, where the status information of sources are stored with their own key values. As shown in Fig. \ref{fig:overall_sysmtem_model}, a source monitors a physical phenomenon and generates a packet with newly observed information. The packet is delivered to \ac{HLF} via a \ac{BS} in a form of a transaction.  Successfully received transactions can update their status information through the consensus process, which is described in Sec. \ref{sec: transaction flow}.

    We define the consensus latency as the total time for the commitment of a transaction, which is the summation of the latencies in each phase in Sec. \ref{sec: transaction flow}. Then, the total latency of the $k\text{th}$ packet can be defined as 
    \begin{equation}
    Z_{k} = X_k + Y, \label{total_latency}
    \end{equation}
    where $X_k$ is the consensus latency of the $k\text{th}$ packet and $Y$ denotes the transmission latency, which is required for transmitting a packet from a source to its associated \ac{BS}.
    The consensus latency $\{X_k, k \geq1 \}$ is assumed to be \ac{i.i.d.}.
    From empirical results of a constructed \ac{HLF} platform, the consensus latency for the $k$th packet is modeled as a Gamma random variable, of which pdf is given by
    \begin{align}
    f_{X_k}(x) = \frac{\beta^\alpha}{\Gamma(\alpha)} x^{\alpha - 1} e^{-\beta x}, \label{gamma_pdf}
    \end{align}
    where $\alpha$ is the shape parameter and $\beta$ is the rate parameters.
    To determine the values of $\alpha$ and $\beta$ for the consensus latency, we use the maximum likelihood estimation \cite{Thom:58}. 
	\subsection{Analysis of Transmission Latency}
	
	In this subsection, we analyze the transmission latency $Y$ of \ac{HeMN} \cite{ChJe:18}. The signal to interference plus noise ratio (SINR) received by a \ac{BS} at $\textbf{y}_\text{o}$ from a source located at $\textbf{x}_\text{o}$ under Rayleigh fading channel is given by
	\begin{align}
	\text{SINR} = \frac{P h_{\textbf{x}_\text{o}, \textbf{y}_\text{o}}  L_{\textbf{x}_\text{o}, \textbf{y}_\text{o}}^{-n}} {I + N_0 W}, \label{SINR_definition}
	\end{align}
	where $h_{\textbf{x}_\text{o}, \textbf{y}_\text{o}}$ is the fading channel gain, i.e., $h_{\textbf{x}_\text{o}, \textbf{y}_\text{o}} \sim \exp(1)$,  $N_0$ is the noise power, $W$ is the channel bandwidth, $L_{\textbf{x}_\text{o}, \textbf{y}_\text{o}}$ is the distance between the source at $\textbf{x}_\text{o}$ and the associated \ac{BS} at $\textbf{y}_\text{o}$, and $n$ is the pathloss exponent. In \eqref{SINR_definition}, $I$ is the interference from other sources that use the same uplink frequency band, given by
	%
	%
	\begin{equation}
	I = P \summation{\textbf{x} \in \Psi_u \setminus \textbf{x}_\text{o} }{} h_{\textbf{x}, \textbf{y}_\text{o}} L^{-n}_{{\textbf{x}, \textbf{y}_\text{o}}},
	\end{equation}
	where $\Psi_u$ denotes the set of locations of the sources which use the same frequency band with the source at $\textbf{x}_\text{o}$. 
	For the data rate $R$, \ac{STP} $p_c$ is given by
	\begin{align}
	 p_c = \P \left [W \log_2(1 + \text{SINR}) \geq R \right ].  \label{STP}
	\end{align}
	To guarantee a certain level of \ac{STP}, we can set the target rate $\bar{R}$ as
	\begin{align}
	p_c = \P \left [W \log_2(1 + \text{SINR}) \geq \bar{R} \right ] \geq \zeta, \label{target_STP}
	\end{align}   
	where $\zeta$ is the target \ac{STP}.
	
	Using \eqref{target_STP}, we define the transmission latency $Y$ as $D/ \bar{R}$, where $D$ [bits] is the packet size.
	%
	%
	Then $\bar{R}$ can be obtained by
	\begin{align}
	\bar{R} = \integral{0}{\infty} \bar{R}(r) \ 2 \lambda \pi r \exp(-\lambda \pi r^2) dr, \label{Transmission_rate}
	\end{align}
	where $\bar{R}(r)$ is the target rate when the distance between the source and the \ac{BS} is fixed to r, i.e., $L_{\textbf{x}_o, \textbf{y}_o} = r$. Note that the distribution of the distance from the source and the nearest \ac{BS} follows a Rayleigh distribution. From the definition of SINR in \eqref{SINR_definition}, $p_c$ can be given by
	\begin{align}
	%
	%
	p_c = \exp\left( \hspace{-0.5mm} -\frac{r^n}{P} N_0 W \theta \hspace{-0.5mm} \right)  \E_I \left [\exp \left (\hspace{-0.5mm}  -\frac{r^n}{P} I \theta \hspace{-0.5mm}  \right) \right ],   \label{p_c_before_laplace}
	%
	\end{align}
	where $\theta = 2^{\bar{R}(r)/W}-1$ by using the \ac{CDF} of the exponential random variable $h_{\textbf{x}_o, \textbf{y}_o}$. The dependence exists among the locations of interfering sources, so their distribution does not follow \ac{HPPP}. Nevertheless, it is shown that this dependency can be weak \cite{ThHaDhJe:13}, so we assume the distribution of the interfering sources follows the \ac{HPPP} with spatial density $\lambda$. Note that a channel is allocated to one source only in the cell of a \ac{BS}, so the  density of uplink interfering sources is $\lambda$. According to \cite[eq.3.21]{Ha:09}, Laplace transform of $I$ can be given by
	\begin{align}
	\mathcal{L}_I (s) = \exp \left\{ -\lambda \pi s^{2/n} \frac{2 \pi }{n \sin(2 \pi /n)}      \right\}. \label{Laplace_interferece}
	\end{align}
	From \eqref{Laplace_interferece}, $p_c$ in \eqref{p_c_before_laplace} can be presented as
	\begin{align}
	p_c = \exp\left( \hspace{-0.5mm} -\frac{r^n}{P} N_0 W \theta \hspace{-0.5mm} \right) \exp \left\{ -\lambda \pi^2 \frac{2r^2 \theta^{2/n}}{n \sin \left(2\pi/n  \right)}   \right\}. \label{p_c_general_pathloss}
	\end{align}
    To obtain a closed-form of $\bar{R}(r)$ from \eqref{target_STP} and \eqref{p_c_general_pathloss}, we set $n=4$ for tractability. We can then present $\bar{R}(r)$ as
	\begin{align}
	\bar{R}(r) \hspace{-0.5mm}  &=  \hspace{-0.5mm}  W \hspace{-0.5mm}  \log_2 \hspace{-0.5mm}  \left[ \hspace{-0.5mm}  1  \hspace{-1.0mm}  + \hspace{-1.0mm}   \left \{ \hspace{-0.5mm}   \frac{P (-\pi^2 \lambda  \hspace{-0.5mm}  +  \hspace{-0.5mm}  \sqrt{\pi^4 \lambda^2  \hspace{-0.5mm}  -  \hspace{-0.5mm}  16N_0 W \log \zeta })} {4 N_0 W r^2} \hspace{-0.5mm}   \right\}^2   \right] \label{transmissionrate fixed r} \nonumber \\
	&= W \log_2 \left\{ 1 + \left( \frac{m}{r^2} \right)^2   \right\},
	\end{align}
	where $m$ is:
	\begin{align}
	m = \frac{P (-\pi^2 \lambda + \sqrt{\pi^4 \lambda^2 - 16N_0 W \log \zeta })} {4 N_0 W}.
	\end{align}
	By substituting \eqref{transmissionrate fixed r} with \eqref{Transmission_rate} and replacing $r^2$ with $t$, $\bar{R}$ is given by
	\begin{align}
	 &\bar{R} = \lambda  \pi W \integral{0}{\infty} \log_2 \left\{ 1 + \left( \frac{m}{t} \right)^2   \right\} \exp({-\lambda \pi t}) dt \nonumber \\
    %
	%
	&= \frac{\lambda \pi W}{\log 2} \integral{0}{\infty} \left[   \log \left\{ t^2 +   m^4   \right\} -\log (t^2) \right] \exp({-\lambda \pi t}) dt \nonumber \\
	%
	%
	&\overset{(a)}{=} \frac{W}{\log 2} \left \{\log m - \text{Ci}(m \lambda \pi) \cos (m \lambda \pi ) \right. \nonumber \\
	& \left. \quad - \text{Si}(m \lambda \pi) \sin (m \lambda \pi) \  
	 + C  + \log (\lambda \pi)     \right \}, \label{closed_form_R}
	\end{align}
	where $(a)$ is obtained from \cite[eq. 4.331, eq. 4.338]{ToI}. Ci($x$) and Si($x$) are cosine integral and sine integral, respectively, and C is an Euler constant. Using \eqref{closed_form_R}, the transmission latency $Y$ is given by
	\begin{align}
	Y \hspace{-0.0mm} &= \frac{D \log 2}{W} \hspace{-0.0mm} \left\{ \log m \hspace{-0.0mm} - \hspace{-0.0mm} \text{Ci}(m \lambda \pi) \cos (m \lambda \pi ) \hspace{-0.0mm} \right. \nonumber \\
	& \left. \quad -  \hspace{-0.0mm} \text{Si}(m \lambda \pi) \sin (m \lambda \pi) \hspace{-0.0mm}  + \hspace{-0.0mm} C \hspace{-0.1mm} + \hspace{-0.0mm} \log (\lambda \pi)\right\}^{-1}.
	\end{align}
	Note that $Y$ changes according to the target \ac{STP} $\zeta$.

		\begin{figure}
		\vspace{-0.0cm}
		\centering
		\begin{center}
			\includegraphics[width=0.70\columnwidth]{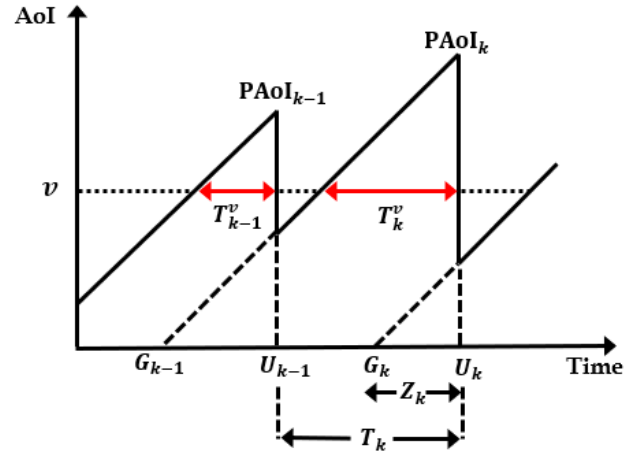}
		\end{center}
		
		\vspace{-0.4cm}
		\caption{A sample path of the \ac{AoI} where $G_k$ and $U_k$ are the generation instant and update instant of the $k\text{th}$ effective packet.  }
		
		\label{fig:sample path}
		\vspace{-0.5cm}
	\end{figure}
	\begin{figure}
		\centering
		\begin{center}
			\includegraphics[width=0.80\columnwidth]{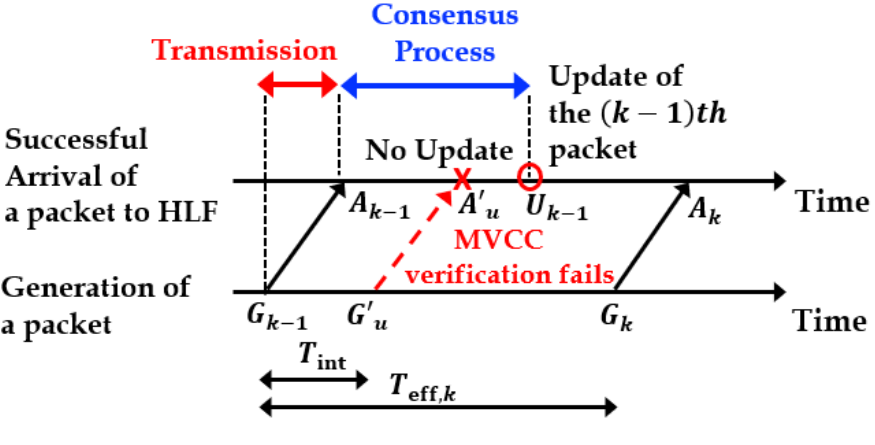}
		\end{center}
		
		\caption{The packet generated at $G'_u$ is become invalid due to
			the \ac{MVCC} verification.}
		
		\label{fig:MVCC}	
		\vspace{-0.5cm}
	\end{figure}

	\section{AoI Analysis of HeMN}
	 In this section, we formulate the \ac{AoI} violation probability for \ac{HeMN}.
	 As a metric for measuring the data freshness, the \ac{AoI} is utilized.
	 We focus on a specific status information, which is updated with a certain key value. We assume the source generates a packet with exponentially distributed inter-generation time with rate $\rho_\text{s}$. The inter-generation time of the two consecutive packets that successfully arrive at the \ac{BS} from the source is denoted as $T_{\text{int}}$ and we consider exponentially distributed $T_\text{int}$ with rate $\rho = \rho_\text{s} p_c$. We assume the consensus latency $X_k$ and $T_\text{int}$ are independent.
	 
	 As depicted in Fig. \ref{fig:MVCC}, not every generated packet can make a valid update because of the \ac{MVCC} verification.
	 We call the packets that make valid updates as \textit{effective packets}. For the $k\text{th}$ effective packet, we denote $G_k$ as the generation instant at the source, $A_k$ as the arrival instant at the \ac{BS}, and $U_k$ as the update instant for the ledger. In addition, $G'_u$ and $A'_u$ are for the generation and arrival instant of the $u\text{th}$ invalid packet, respectively. Then, the inter-generation time of effective packets $T_{\text{eff}, k}$ can be defined as $T_{\text{eff}, k} = G_k - G_{k-1}$. The $k\text{th}$ packet can be effective only if its arrival instant $A_{k}$ is after  $U_{k-1}$, which is the update instant of the previous $(k-1)\text{th}$ effective packet. Considering the transmission latency $Y,$ any generated packet after $Z_{k-1}-Y$ can be an effective packet. 
	 Due to the memoryless property of $T_{\text{int}}$,
	 $T_{\text{eff}, k}$ can also be defined as
	 \begin{align}
	 T_{\text{eff}, k} = Z_{k-1} - Y + T_{\text{int}},  
	 \end{align}

     The \ac{PAoI} is the \ac{AoI} before the update instant \cite{MAHD:19}. For the $k\text{th}$ effective packet, \ac{PAoI}$_{k}$ is  expressed as $ T_{\text{eff}, k} + Z_k $ as shown in Fig. \ref{fig:sample path}. To see whether a certain level of freshness is guaranteed in \ac{HeMN},
     we can evaluate the \ac{AoI} violation probability in \ac{HeMN} based on the sample path analysis introduced in \cite{JaHuJa:19}. Then, the \ac{AoI} violation probability is given by  \cite{JaHuJa:19}
	\begin{align}
	\P[\text{AoI} \geq v] = \frac{\E[T^v_k]}  { \E[T_{k}] }, \label{ratio_of_E}
	\end{align}
	where $T^v_k$ is the time duration of the \ac{AoI} being larger than the target \ac{AoI} $v$ between the update instants $U_{k-1}$ and $U_k$, and $T_k$ is the interval between update instants $U_{k-1}$ and $U_k$. In \ac{HeMN}, $T^v_k$ can be defined as 
	\begin{align}
	T^v_k &= \min{ \left\{ \left(\ac{PAoI}_k  -   v \right)^+,   T_k  \right\} }  \nonumber   \\
	&= \min{ \left\{ \left( X_{k-1} \hspace{-0.5mm} + \hspace{-0.5mm} X_{k} \hspace{-0.5mm} +  \hspace{-0.5mm}T_\text{int}  \hspace{-0.5mm} + \hspace{-0.5mm} Y  \hspace{-0.5mm}-  \hspace{-0.5mm} v \right)^+,   X_{k}  \hspace{-0.5mm}+ \hspace{-0.5mm} T_\text{int}         \right\}          }, \label{min_T^v_k}
	\end{align}
	where $(\cdot)^+ = \max {(0, \ \cdot) }$. As shown in Fig. \ref{fig:sample path}, $T_k$ can be given by
	\begin{equation}
	T_k = T_{\text{eff},k} + Z_{k} - Z_{k-1} =X_k + T_\text{int}.
	\end{equation}
	We now obtain the \ac{AoI} violation probability of \ac{HeMN} in the following theorem.
	\begin{theorem}
		In \ac{HeMN}, the \ac{AoI} violation probability is given by \eqref{AoI_violation_Prob} on the top of this page.

		\begin{figure*}[!t]
		\begin{align}
		\P[\text{AoI} \geq v] &= \frac{\rho \beta^{2\alpha+1}}{\left( \beta \hspace{-0.5mm} + \hspace{-0.5mm} \rho \alpha \right) \Gamma(\alpha)^2} \summation{n=0}{\infty} \frac{\left( \rho \hspace{-0.5mm}  - \hspace{-0.5mm} \beta  \right)^n}{n! (\alpha \hspace{-0.5mm} + \hspace{-0.5mm} n )   \rho^{\alpha +n+1} } \left[  \frac{\Gamma(\alpha \hspace{-0.5mm} + \hspace{-0.5mm} n \hspace{-0.5mm} + \hspace{-0.5mm} 1)}{\beta^\alpha} \gamma\left(\alpha, \beta T_c   \right) - \summation{k=0}{\infty} \frac{(-1)^k (\rho T_c)^{\alpha +  n   +  k   +  1}}{k! \left( \alpha  \hspace{-0.5mm} + \hspace{-0.5mm} n \hspace{-0.5mm} + \hspace{-0.5mm}  k \hspace{-0.5mm} +  \hspace{-0.5mm} 1 \right) } B\left( \alpha  \hspace{-0.5mm} + \hspace{-0.5mm} n \hspace{-0.5mm} +  \hspace{-0.5mm}  k \hspace{-0.5mm}  + \hspace{-0.5mm} 2 , \alpha  \right) \right. \nonumber \\ 
		&\left. \times  T_c^{\alpha}   {_1}F_1 \left(\alpha; 2\alpha  \hspace{-0.5mm} + \hspace{-0.5mm} n \hspace{-0.5mm} + \hspace{-0.5mm} k  \hspace{-0.5mm} + \hspace{-0.5mm}  2; -\beta T_c \right)  \right] \hspace{-0.5mm} + \hspace{-0.5mm}  \frac{\rho}{(\beta \hspace{-0.5mm}  + \hspace{-0.5mm} \rho \alpha) \Gamma(\alpha)}  \left\{ \alpha \gamma(\alpha, \beta T_c)  \hspace{-0.5mm} -  \hspace{-0.5mm}   \summation{n=0}{\infty} \frac{ (-\beta T_c)^{2\alpha+n+1}} {n! \Gamma(\alpha) (\alpha  \hspace{-0.5mm} + \hspace{-0.5mm} n \hspace{-0.5mm}  + \hspace{-0.5mm} 1)} B(\alpha \hspace{-0.5mm} + \hspace{-0.5mm} n \hspace{-0.5mm} + \hspace{-0.5mm} 2, \alpha) \right. \nonumber \\
		& \left. \times {_1}F_1 (\alpha; 2\alpha \hspace{-0.5mm} + \hspace{-0.5mm} n \hspace{-0.5mm} + \hspace{-0.5mm} 2; -\beta T_c) - (\beta T_c)^{\alpha+1}  B(\alpha, 2) T_c^{\alpha} {_1}F_1 (\alpha; \alpha  \hspace{-0.5mm} + \hspace{-0.5mm} 2; -\beta T_c) \right. \nonumber \\
		&\left. +  \summation{n=0}{\infty} \frac{ (-\beta T_c)^{2\alpha+n+1}}{  n! \Gamma(\alpha) (\alpha+n)} B(\alpha , \alpha \hspace{-0.5mm}  +\hspace{-0.5mm} n \hspace{-0.5mm} + \hspace{-0.5mm} 2) {_1}F_1(\alpha; 2\alpha \hspace{-0.5mm} + \hspace{-0.5mm} n \hspace{-0.5mm} + \hspace{-0.5mm} 2; \beta T_c) \right\} + \frac{\Gamma(\alpha, \beta T_c)}{\Gamma(\alpha)}. \label{AoI_violation_Prob}
		\end{align}

		\centering \rule[0pt]{18cm}{0.3pt}
		
	\end{figure*}

	\end{theorem}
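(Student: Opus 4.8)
The plan is to evaluate the ratio $\P[\text{AoI}\ge v]=\E[T^v_k]/\E[T_k]$ from \eqref{ratio_of_E} by computing the denominator directly and reducing the numerator to integrals over the independent variates $X_{k-1},X_k\sim\Gamma(\alpha,\beta)$ (density \eqref{gamma_pdf}) and $T_\text{int}\sim\mathrm{Exp}(\rho)$. Since $T_k=X_k+T_\text{int}$, we have $\E[T_k]=\alpha/\beta+1/\rho=(\alpha\rho+\beta)/(\beta\rho)$, which accounts for the common factor $\beta\rho/(\alpha\rho+\beta)$ appearing throughout \eqref{AoI_violation_Prob}. Writing $T_c:=v-Y$ for the effective threshold, a short case analysis on the signs of $X_{k-1}-T_c$ and of $X_{k-1}+X_k+T_\text{int}-T_c$ shows that the minimum in \eqref{min_T^v_k} telescopes, namely
\[
T^v_k=\bigl(X_{k-1}+X_k+T_\text{int}-T_c\bigr)^+-\bigl(X_{k-1}-T_c\bigr)^+ ,
\]
so that $\E[T^v_k]=\E[(X_{k-1}+X_k+T_\text{int}-T_c)^+]-\E[(X_{k-1}-T_c)^+]$. (At $T_c\le 0$ this reduces to $T_k$, giving $\P[\text{AoI}\ge v]=1$, consistent with the boundary $v\le Y$.)

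The second expectation is elementary: integrating the Gamma tail gives $\E[(X_{k-1}-T_c)^+]=\Gamma(\alpha+1,\beta T_c)/(\beta\Gamma(\alpha))-T_c\,\Gamma(\alpha,\beta T_c)/\Gamma(\alpha)$. For the first expectation I condition on $X_{k-1}=a$. On $\{a\ge T_c\}$ the positive part is inactive and the conditional mean is $\alpha/\beta+1/\rho+a-T_c$; integrating against the $\Gamma(\alpha,\beta)$ density over $[T_c,\infty)$ yields $(\alpha/\beta+1/\rho)\Gamma(\alpha,\beta T_c)/\Gamma(\alpha)$ plus a term that exactly cancels $\E[(X_{k-1}-T_c)^+]$ — after dividing by $\E[T_k]$ this leaves precisely the standalone term $\Gamma(\alpha,\beta T_c)/\Gamma(\alpha)$ in \eqref{AoI_violation_Prob}. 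On $\{a<T_c\}$ the remaining contribution is $\int_0^{T_c}f_{X}(a)\,g(T_c-a)\,da$ with $g(c):=\E[(X_k+T_\text{int}-c)^+]$, which is the object that generates all the remaining, hypergeometric, terms.

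To evaluate $g$, I first obtain the density $f_S$ of $S:=X_k+T_\text{int}$ by convolving \eqref{gamma_pdf} with the exponential density and expanding the resulting lower incomplete gamma as a power series, giving the entire series representation
\[
f_S(s)=\frac{\rho\beta^{\alpha}}{\Gamma(\alpha)}\,e^{-\rho s}\sum_{n=0}^{\infty}\frac{(\rho-\beta)^{n}}{n!\,(\alpha+n)}\,s^{\alpha+n},
\]
valid for all $\beta,\rho>0$ (including the degenerate case $\beta=\rho$). Then $g(c)=\int_c^{\infty}(s-c)f_S(s)\,ds$, and integrating term by term via $\int_c^{\infty}s^{m}e^{-\rho s}\,ds=\rho^{-(m+1)}\Gamma(m+1,\rho c)$ expresses $g(c)$ as a series in $\Gamma(\alpha+n+2,\rho c)$ and $c\,\Gamma(\alpha+n+1,\rho c)$. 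Substituting $c=T_c-a$ and integrating against $f_X(a)=\tfrac{\beta^{\alpha}}{\Gamma(\alpha)}a^{\alpha-1}e^{-\beta a}$ over $(0,T_c)$, I split each upper incomplete gamma $\Gamma(\cdot,\rho(T_c-a))$ as $\Gamma(\cdot)$ minus its power series in $(T_c-a)$, and each resulting term reduces to an Euler–Kummer integral
\[
\int_0^{T_c}a^{\alpha-1}(T_c-a)^{\mu-1}e^{-\beta a}\,da=T_c^{\alpha+\mu-1}\,B(\alpha,\mu)\,{}_1F_1(\alpha;\alpha+\mu;-\beta T_c),
\]
which produces the nested sums with the $\gamma(\alpha,\beta T_c)$, $B(\cdot,\cdot)$ and ${}_1F_1$ factors in \eqref{AoI_violation_Prob}. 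Collecting all contributions and multiplying by $\beta\rho/(\alpha\rho+\beta)$ gives the claimed closed form.

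The main obstacle is this last step: the doubly infinite summation and the need to carry every incomplete-gamma, Beta and confluent-hypergeometric piece through the index shifts without error, in particular getting the arguments of the ${}_1F_1$'s right. The interchanges of summation and integration are legitimate because the inner power series are entire and converge uniformly on the compact interval $[0,T_c]$ (and the tail integrands are dominated by integrable exponentials on $(c,\infty)$), so Tonelli/Fubini and dominated convergence apply; the real work is purely the bookkeeping of the two summation indices. A worthwhile sanity check is the boundary $T_c=0$ (i.e.\ $v=Y$) and the case $\beta=\rho$, where \eqref{AoI_violation_Prob} must — and does — collapse to $1$.
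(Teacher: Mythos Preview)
Your plan is correct and follows essentially the same route as the paper: condition on $X_{k-1}\lessgtr T_c$, expand the lower incomplete gamma as its power series, and close each remaining $\int_0^{T_c}a^{\alpha-1}(T_c-a)^{\mu-1}e^{-\beta a}\,da$ with the Euler--Kummer identity. Your telescoping identity $T^v_k=(X_{k-1}+X_k+T_{\mathrm{int}}-T_c)^+-(X_{k-1}-T_c)^+$ is a clean way to see why the standalone $\Gamma(\alpha,\beta T_c)/\Gamma(\alpha)$ drops out; the paper obtains the same term without stating this identity, simply by observing that on $\{X_{k-1}>T_c\}$ the minimum in \eqref{min_T^v_k} equals $X_k+T_{\mathrm{int}}$.

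The one organizational difference worth flagging: on $\{X_{k-1}=x<T_c\}$ the paper does \emph{not} pass through the convolution density $f_S$ of $X_k+T_{\mathrm{int}}$. Instead it conditions on $X_k=w$ first and splits at $w=a+T_c-x$, which separates $\P[T_{\mathrm{int}}\ge a+T_c-x-X_k]$ into an exponential-weighted piece and the tail $\P[X_k>a+T_c-x]=\Gamma(\alpha,\beta(a+T_c-x))/\Gamma(\alpha)$. After integrating in $a$ this produces three summands $f_1,f_2,f_3$, with $f_2+f_3=\E[(X_k-(T_c-x))^+]$ carrying incomplete gammas in $\beta(T_c-x)$ rather than $\rho(T_c-x)$. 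That split is precisely what generates the $(\beta T_c)$-argument pieces in the second brace of \eqref{AoI_violation_Prob}. Your $f_S$ route yields only $\rho$-argument incomplete gammas, so the series you obtain is an equivalent rearrangement but not literally \eqref{AoI_violation_Prob}; if you want to land on the stated display verbatim, mimic the paper's $X_k$-conditioning split rather than convolving first.
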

	\begin{proof}
		In \eqref{ratio_of_E}, the expectation of $T^v_{k}$ is given by
		\begin{align}
		\E[T^v_{k}] = \integral{0}{\infty} \P[T^v_{k} \geq a] da.  \label{expectation_t_k^v}
		%
		%
		\end{align}
		%
		%
		%
		%
		Using \eqref{min_T^v_k} and \eqref{expectation_t_k^v}, $\E[T^v_{k}]$ can be represented by
		%
		%
		%
		\begin{align}
		\E[T^v_{k}]  &= \integral{0}{T_\text{c}} \hspace{-2.0mm} \integral{0}{\infty} \hspace{-1.5mm} \P \left[ \left( x \hspace{-0.5mm} + \hspace{-0.5mm} X_{k} \hspace{-0.5mm} + \hspace{-0.5mm} T_\text{int} \hspace{-0.5mm} - \hspace{-0.5mm}  T_\text{c} \right) \geq a \right] f_{X_{k-1}}(x) \ da dx \nonumber \\
		& \quad + \integral{T_\text{c}}{\infty} \hspace{-1.0mm} \integral{0}{\infty} \P[X_k + T_{int} \geq a] f_{X_{k-1}} (x)\ da dx,  \label{T^v_k} 
		\end{align}
		where $f_{X_{k-1}}(x)$ is in \eqref{gamma_pdf}, and $T_\text{c} =v-Y$.
		%
		%
		%
		In \eqref{T^v_k}, $\P \left[ \left( x \hspace{-0.5mm} + \hspace{-0.5mm} X_{k} \hspace{-0.5mm} + \hspace{-0.5mm} T_\text{int} \hspace{-0.5mm} - \hspace{-0.5mm}  T_\text{c} \right) \geq a \right]$ can be given by
		%
		\begin{align}
		&\P \left[ T_\text{int}  \geq a + T_\text{c} - x - X_k \right]  \nonumber \\
		& \hspace{-0.9mm} \overset{(a)}{=} \hspace{-0.5mm}  \integral{0}{a + T_\text{c} - x} \hspace{-2.0mm}  e^{-\rho \left( a + T_\text{c} -x -w  \right)}  f_{X_k} (w) dw + \integral{a + T_\text{c} - x}{\infty} \hspace{-2.0mm}  f_{X_k} (w) dw \nonumber \\ 
		& \hspace{-0.9mm} \overset{(b)}{=} \hspace{-0.9mm}  \frac{\beta^\alpha \hspace{-0.5mm} e^{-\rho \left(a \hspace{-0.3mm} +  \hspace{-0.3mm} T_\text{c} \hspace{-0.3mm} - \hspace{-0.3mm} x  \right)  }}{\Gamma(\alpha)} \frac{\gamma \hspace{-0.5mm} 
		\left(
		\hspace{-0.5mm} \alpha, \left( \hspace{-0.5mm} \beta \hspace{-0.5mm} - \hspace{-0.5mm} \rho   \hspace{-0.5mm}     \right) \hspace{-0.5mm}  \left( \hspace{-0.5mm} a \hspace{-0.5mm}  + \hspace{-0.5mm}  T_\text{c}  \hspace{-0.5mm} -   \hspace{-0.5mm}  x  \hspace{-0.5mm} \right)   \hspace{-0.5mm}   
		\right)
	    }{(\beta - \rho)^\alpha} \hspace{-0.5mm} + \hspace{-0.5mm} \frac{\Gamma \hspace{-0.7mm} 
		\left(
		\hspace{-0.5mm} \alpha, \beta \hspace{-0.5mm} \left( \hspace{-0.5mm}  a \hspace{-0.5mm}  + \hspace{-0.5mm}  T_\text{c} \hspace{-0.5mm}  - \hspace{-0.5mm}  x  \hspace{-0.5mm} \right) \hspace{-0.5mm}  
		\right)
	    }{\Gamma(\alpha)}, \label{Just Prob}
		\end{align}
		where $(a)$ is obtained from the fact that $\P \left[ T_\text{int} \geq a  + T_c - x - X_k\right]$ is always one when $X_k$ is larger than $a + T_\text{c} - x$, and the exponential distribution of $T_\text{int}$. In \eqref{Just Prob}, $\gamma(\cdot, \cdot)$ and $\Gamma(\cdot, \cdot)$ are the incomplete gamma functions, given by
		%
		\begin{align}
		\gamma(\alpha, x) \hspace{-0.5mm} =  \hspace{-0.5mm} \integral{0}{x}  \hspace{-0.5mm} t^{\alpha - 1}  \hspace{-0.5mm} e^{-t} dt , \ \Gamma(\alpha, x)  \hspace{-0.5mm} =  \hspace{-0.5mm} \integral{x}{\infty}  \hspace{-0.5mm} t^{\alpha - 1}  \hspace{-0.5mm} e^{-t} dt.
		\end{align}
		
		Using \eqref{Just Prob}, the inner integral of the first term in \eqref{T^v_k} can be obtained as
		%
		%
		\begin{align}
		&\integral{0}{\infty} \hspace{-0.5mm} \P \left[ \left( x \hspace{-0.5mm} + \hspace{-0.5mm} X_{k} \hspace{-0.5mm} + \hspace{-0.5mm} T_\text{int} \hspace{-0.5mm} - \hspace{-0.5mm}  T_\text{c} \right) \geq a \right] da \nonumber \\ 
		%
		 &= \integral{0}{\infty} \left[ \frac{\beta^\alpha e^{-\rho \left(  a + T_\text{c} - x  \right)  }}{\Gamma(\alpha)} \frac{\gamma 
		 \left(
		 \alpha, \left(  \beta  -\rho        \right)  \left( a + T_\text{c} -x   \right)      
		 \right)
	     }{(\beta - \rho)^\alpha} \right. \nonumber \\ 
		&\left. \quad + \frac{\Gamma
		\left(
		\alpha, \beta \left( a + T_\text{c} - x  \right)
		\right)
	    }{\Gamma(\alpha)} \right] da \nonumber \\
		&\overset{(a)}{=} \frac{\beta^\alpha }{ \Gamma(\alpha)} \summation{n=0}{\infty} \frac{ \left(   \rho \hspace{-0.5mm}  -  \hspace{-0.5mm}  \beta \right)^{n} }   {n! \left(  \alpha  \hspace{-0.5mm}  + \hspace{-0.5mm}  n  \right)} \integral{0}{\infty} \hspace{-2.5mm}   \left(a \hspace{-0.5mm}  +  \hspace{-0.5mm} T_\text{c} \hspace{-0.5mm}   -  \hspace{-0.5mm}  x  \right)^{\alpha + n} e^{-\rho \left( a + T_\text{c} -  x  \right)} da \nonumber \\
		& \quad + \frac{1}{\Gamma(\alpha)} \integral{0}{\infty}  \integral{\beta(a + T_\text{c} -x )}{\infty}   t^{\alpha-1} e^{-t} dt da   \nonumber  \\
		&= \underbrace{\frac{\beta^\alpha}{\Gamma(\alpha) } \summation{n=0}{\infty} \frac{\left(  \rho - \beta \right)^n}{n! \left( \alpha + n \right)} \frac{\Gamma
		\left(
		\alpha + n + 1, \rho \left(T_\text{c} - x \right)   
		\right)
	    }{\rho^{\alpha +n +1}}}_{f_1(x)} \nonumber \\ 
		& \quad + \underbrace{\frac{ \Gamma 
		\left(
		\alpha \hspace{-0.5mm} + \hspace{-0.5mm} 1, \beta \left( T_\text{c} \hspace{-0.5mm} - \hspace{-0.5mm} x \right)    
		\right)
	    }  {\beta \Gamma(\alpha)}}_{f_2(x)} + \underbrace{\frac{ \left( x \hspace{-0.5mm} - \hspace{-0.5mm} T_\text{c} \right) }{\Gamma(\alpha) } \Gamma
        \left(
        \alpha, \beta \left( T_\text{c} \hspace{-0.5mm} - \hspace{-0.5mm} x  \right)   
        \right)
        }_{f_3(x)},  \label{substitution}
		\end{align}
		where $(a)$ is from the fact that $\gamma(\alpha, x)$ can be represented as \cite[eq. 8.354-1]{ToI}
		%
		%
		\begin{align}
		\gamma(\alpha, x) = \summation{n=0}{\infty} \frac{(-1)^n x^{\alpha + n}}{n! (\alpha + n)}. \label{8.354-1}
		\end{align}
		Using \eqref{substitution}, the first term in \eqref{T^v_k} can be represented as
		\begin{align}
		\integral{0}{T_\text{c}} \left\{f_1(x) + f_2(x) +  f_3(x) \right\} f_{X_{k-1}}(x) dx , \label{f_substitution}
		\end{align}
		where $f_i(x) , i=1, 2, 3$ is represented in \eqref{substitution}. In \eqref{f_substitution}, the first term can be given by 
		\begin{align}
		&\integral{0}{T_\text{c}} \hspace{-1.5mm}  f_1(x) f_{X_{k-1}}(x) dx \nonumber \\
		%
		%
		&\overset{(a)} {=} \frac{\beta^{2\alpha}}{\Gamma(\alpha)^2 } \summation{n=0}{\infty} \frac{\left(  \rho - \beta \right)^n}{n! \left( \alpha + n \right) \rho^{\alpha +n +1}} \integral{0}{T_\text{c}} \bigg[  \Gamma\left( \alpha +n + 1 \right)   \nonumber 
		\end{align}

		\begin{align}
		& \quad  - \summation{k=0}{\infty} \frac{(-1)^k \left\{ \rho \left ( T_\text{c} -x \right)  \right\}^{\alpha+n+k+1} }{k! \left( \alpha +n+k+1 \right)}  \bigg ] x^{\alpha - 1} e^{-\beta x} dx  \nonumber \\
		%
		& \overset{(b)} {=} \frac{\beta^{2\alpha}}{\Gamma(\alpha)^2 } \summation{n=0}{\infty} \frac{\left(  \rho - \beta \right)^n}{n! \left( \alpha + n \right) \rho^{\alpha +n +1}} \bigg[ \frac{ \Gamma\left( \alpha \hspace{-0.5mm} + \hspace{-0.5mm} n \hspace{-0.5mm}  + \hspace{-0.5mm}  1 \right)} {\beta^\alpha} \gamma\left(\alpha, \beta T_\text{c} \right)  \nonumber \\
		& \quad -\summation{k=0}{\infty} \frac{(-1)^k (\rho T_\text{c})^{\alpha+n+k+1} }{k! \left( \alpha +n+k+1 \right)} T_\text{c}^{\alpha} B\left(\alpha \hspace{-0.5mm} + \hspace{-0.5mm} n \hspace{-0.5mm} + \hspace{-0.5mm} k \hspace{-0.5mm} + \hspace{-0.5mm} 2, \alpha \right)   \nonumber \\
		& \quad \quad \times {_1}F_1 \left( \alpha; 2\alpha  \hspace{-0.5mm} + \hspace{-0.5mm}  n  \hspace{-0.5mm} +  \hspace{-0.5mm} k  \hspace{-0.5mm}  + \hspace{-0.5mm}  2; \beta T_\text{c} \right) \bigg],  \label{1_3}
		\end{align}
		%
		where $B(\cdot, \cdot)$ is the beta function and ${_1}F_1 (\cdot \ ; \cdot \ ; \cdot)$ is the confluent hypergeometric function. The derivation of $(a)$ is obtained since $\Gamma(\alpha, x) = \Gamma(\alpha) - \gamma(\alpha, x)$ and $(b)$ is from \cite[eq. 3.383-1]{ToI}. Similarly,	
		in \eqref{f_substitution}, the second term can be represented by
		\begin{align}
		&\integral{0}{T_\text{c}} f_2(x) f_{X_{k-1}}(x)dx \nonumber \\
		&=\frac{\beta^{2\alpha}}{\Gamma(\alpha)^2} \hspace{-1.5mm} \integral{0}{T_\text{c}} \hspace{-1.2mm} \left[ \hspace{-0.5mm} \Gamma \hspace{-0.5mm} (\hspace{-0.5mm} \alpha \hspace{-0.5mm}+ \hspace{-0.5mm}  1 \hspace{-0.5mm} ) \hspace{-0.7mm} - \hspace{-0.7mm} \summation{n=0}{\infty} \hspace{-0.7mm} \frac{(-\beta)^n \hspace{-0.5mm} \left ( T_\text{c} \hspace{-0.5mm}  -  \hspace{-0.5mm} x \right)^{\alpha \hspace{-0.5mm} + \hspace{-0.5mm} n \hspace{-0.5mm} + \hspace{-0.5mm} 1} }{n! \left( \alpha  \hspace{-0.5mm} +  \hspace{-0.5mm}n \hspace{-0.5mm} + \hspace{-0.5mm} 1 \right)} \right] \hspace{-0.7mm} x^{\alpha \hspace{-0.5mm} - \hspace{-0.5mm} 1}  \hspace{-0.5mm} e^{-\beta x} \hspace{-0.5mm} dx \nonumber \\
		&= \frac{\alpha \gamma(\alpha, \beta T_\text{c})}{\beta \Gamma(\alpha)} - \frac{T_\text{c}}{\Gamma(\alpha)^2} \summation{n=0}{\infty} \frac{(-1)^n (\beta T_\text{c})^{2\alpha+n} }{n! \left( \alpha +n+1 \right)} \nonumber \\
		&\quad \times B(\alpha \hspace{-0.5mm} +  \hspace{-0.5mm}  n \hspace{-0.5mm} +  \hspace{-0.5mm} 2, \alpha) {_1}F_1(\alpha; 2\alpha \hspace{-0.5mm} +  \hspace{-0.5mm} n  \hspace{-0.5mm}+ \hspace{-0.5mm} 2 ; -\beta T_\text{c}). \label{2_2}
		\end{align}   
     	The third term in \eqref{f_substitution} is given by
		\begin{align}
		&\integral{0}{T_\text{c}} f_3(x) f_{X_{k-1}}(x) dx \nonumber \\
		%
		&\overset{(a)}{=} -\frac{\beta^\alpha}{\Gamma(\alpha)^2} \integral{0}{T_\text{c}} k \Gamma(\alpha, \beta k) (T_\text{c} -k)^{\alpha - 1} e^{-\beta (T_\text{c} - k)} dk \nonumber \\
		&\overset{(b)}{=} -\frac{\beta^\alpha}{\Gamma(\alpha)^2} \integral{0}{T_\text{c}} \left\{ \Gamma(\alpha)  - \summation{n=0}{\infty} \frac{(-1)^n (\beta k)^{\alpha +n}  }{n! (\alpha +n)}  \right\} \nonumber \\
		& \quad \quad \times 
		k (T_\text{c} - k)^{\alpha - 1} e^{-\beta (T_\text{c} - k)} dk \nonumber\\
		&\overset{(c)}{=} -\frac{T_\text{c}(\beta T_\text{c})^\alpha }{\Gamma(\alpha)} B(\alpha, 2)  {_1}F_1(\alpha; \alpha \hspace{-0.5mm} + \hspace{-0.5mm} 2; -\beta T_\text{c}) \nonumber \\
		& \quad + \frac{T_\text{c}}{\Gamma(\alpha)^2} \summation{n=0}{\infty} \frac{(-1)^n (\beta T_\text{c})^{2\alpha +n}  }{n! (\alpha +n)} B(\alpha, \alpha \hspace{-0.5mm} +  \hspace{-0.5mm} n \hspace{-0.5mm}  + \hspace{-0.5mm} 2) \nonumber \\
		&\quad  \quad \times {_1}F_1(\alpha; 2\alpha  \hspace{-0.5mm} + \hspace{-0.5mm} n \hspace{-0.5mm} + \hspace{-0.5mm} 2; -\beta T_\text{c}), \label{3_3}
		\end{align}
	where $(a)$ is obtained by substituting $k$ for $T_\text{c} - x$, $(b)$ is from \eqref{8.354-1} and $(c)$ is obtained by the similar steps used in \eqref{1_3} and then simplified by using ${_1}F_1 (a;b;z) = e^z {_1}F_1(b-a;b;-z)$ in \cite[9.212-1]{ToI}.
	%
	%
	%

		Now, we derive the second term of \eqref{T^v_k}. In the integral range of $a$, $X_k + T_\text{int} \geq 0$ holds. Hence, we have
		\begin{align}
		&\integral{T_\text{c}}{\infty} \integral{0}{\infty} \P[X_k + T_{int} \geq a] da f_{X_{k-1}} (x) dx \nonumber  \\ 
		&\overset{(a)}{=}  \left( \E[X_k] + \E [T_\text{int}] \right) \integral {T_\text{c}}{\infty} f_{X_{k-1}} (x) dx \nonumber \\ 
		&\overset{(b)}{=} \left( \frac{\alpha}{\beta} + \frac{1}{\rho}  \right) \frac{\Gamma(\alpha, \beta T_\text{c})}{\Gamma(\alpha)} \label{T_ak_2},
		\end{align}
		where $(a)$ is obtained since $X_k$ and $T_\text{int}$ are independent and $(b)$ is from the fact that $X_k$ and $T_\text{int}$ are the Gamma and exponential random variable, respectively.
		%
		
		Finally, $\E[T^v_{k}]$ is the summation of \eqref{1_3}, \eqref{2_2}, \eqref{3_3}, and \eqref{T_ak_2}.
		Note that the $\E[T_k] = \E[X_k + T_\text{int}] = \frac{\alpha}{\beta} + \frac{1}{\rho}$. Therefore, we obtain \eqref{AoI_violation_Prob} by the ratio of $\E[T^v_{k}]$ and $\E[T_k]$ as \eqref{ratio_of_E}
	\end{proof}
	Note that the \ac{AoI} violation probability in Theorem 1 can be applied to general \ac{HLF} networks with version 1.0 or higher \footnote{\ac{HLF} with version 1.0 or higher includes the MVCC verification}.
\section{Numerical Results}
In this section, we verify the \ac{AoI} violation probability analysis and show the impact of $\zeta$ on it.
We set $\rho_s = 15$, $P = 1$ W, $N_0 = -100$ dBm, $W = 1$ MHz, $D = 500$ Kb, and $\lambda = 0.0001$ (BS/km$^2$) as default values. We implement experiments for data freshness analysis in a \ac{HLF} platform v1.3 on one physical machine with Intel(R) Xeon W-2155 @ 3.30GHz. 
We generate the transactions with exponentially distributed inter-generation time to update a certain target key-value, which takes 30 percent of the whole generation. We also measure the consensus latency of the generated target transactions for different $\zeta$ from 0.3 to 1. Then, we fit statistical distribution for the thousand samples at different $\zeta$. The corresponding estimated parameters are averaged over five runs and shown in Table \ref{table:2}

  \begin{table}
 	\vspace{0.1cm}
  	\caption{Estimated shape ($\alpha$) and rate parameters ($\beta$) for a Gamma distribution } \label{table:2}
	\vspace{-0.2cm}
	\begin{center}

		\renewcommand{\arraystretch}{1.3}
	\begin{tabular}{|c|c|c|c|c|}
	\cline{1-2} \cline{4-5}
	\begin{tabular}[c]{@{}c@{}}Target\\ STP\end{tabular} & \begin{tabular}[c]{@{}c@{}}Average estimate\\ ($\alpha, \beta$)\end{tabular} &  & \begin{tabular}[c]{@{}c@{}}Target\\ STP\end{tabular} & \begin{tabular}[c]{@{}c@{}}Average estimate\\ ($\alpha, \beta$)\end{tabular} \\ \cline{1-2} \cline{4-5} 
	0.3                                                  & (5.64, 3.01)                                                      &  & 0.7                                                  & (7.18, 3.73)                                                      \\ \cline{1-2} \cline{4-5} 
	0.4                                                  & (5.94, 2.45)                                                      &  & 0.8                                                  & (7.71, 4.12)                                                      \\ \cline{1-2} \cline{4-5} 
	0.5                                                  & (5.39, 2.85)                                                      &  & 0.9                                                  & (7.50, 4.35)                                                      \\ \cline{1-2} \cline{4-5} 
	0.6                                                  & (5.42, 2.84)                                                      &  & 1.0                                                  & (6.57, 3.82)                                                      \\ \cline{1-2} \cline{4-5} 
	\end{tabular}
	\end{center}
	\vspace{-0.5cm}
\end{table}%

\begin{figure}
	\centering
	\vspace{-0.4cm}
	\begin{center}   
		{ 
			\psfrag{aaaaaaaaaaaaaaaaaaaaa}[Bl][Bl][0.59]   {Experiment, $\zeta$ = 0.4}
			\psfrag{a2}[Bl][Bl][0.59]   {Simulation, $\zeta$ = 0.4}
			\psfrag{a3}[Bl][Bl][0.59]   {Analysis, $\zeta$ = 0.4}
			\psfrag{b1}[Bl][Bl][0.59]   {Experiment, $\zeta$ = 0.6}
			\psfrag{b2}[Bl][Bl][0.59]   {Simulation, $\zeta$ = 0.6}
			\psfrag{b3}[Bl][Bl][0.59]   {Analysis, $\zeta$ = 0.6}
			\psfrag{c1}[Bl][Bl][0.59]   {Experiment, $\zeta$ = 0.8}
			\psfrag{c2}[Bl][Bl][0.59]   {Simulation, $\zeta$ = 0.8}
			\psfrag{c3}[Bl][Bl][0.59]   {Analysis, $\zeta$ = 0.8}
			\psfrag{Y1111111111}[bc][bc][0.7] {$\P[\text{AoI} \geq v]$}
			\psfrag{X1111111111}[tc][tc][0.7] {Target AoI, $v$}
			\includegraphics[width=0.93\columnwidth]{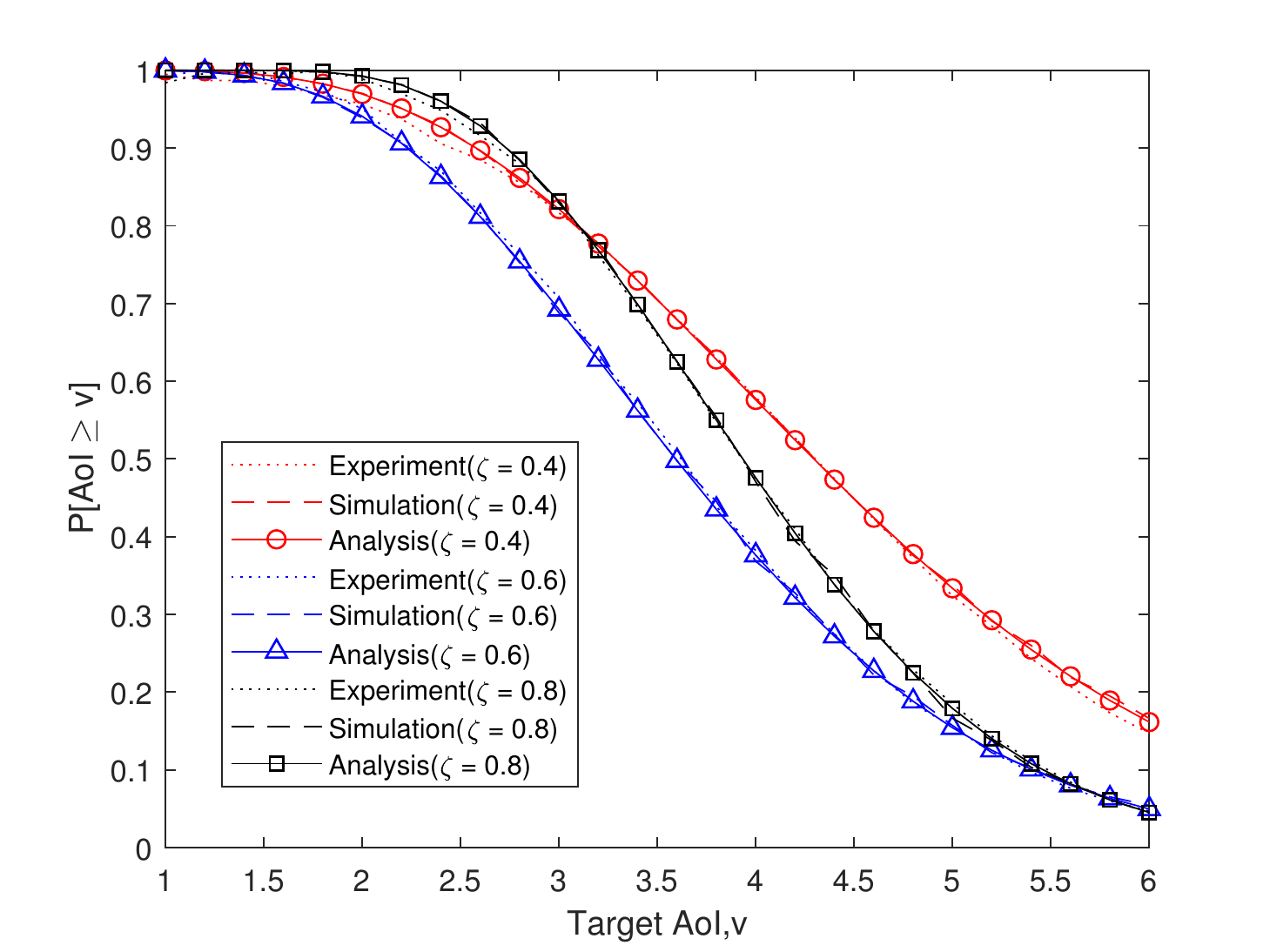}
		}
	\end{center}
	\vspace{-0.3cm}
	\caption{\ac{AoI} violation probability $\P[\text{\ac{AoI}} \geq v]$ as a function of a target \ac{AoI} $v$ for different target \ac{STP}s $\zeta$.}
	
	\label{fig: STP_validation}	
	\vspace{-0.5cm}
\end{figure}

In Figure \ref{fig: STP_validation}, we compare the analysis results with the simulation and experiment results as a function of $v$ at different $\zeta$. In the simulations, the sample path of the \ac{AoI} for the target key-value is generated and iterated over 10000 times in MATLAB environment. Fig. \ref{fig: STP_validation} shows that the analysis results match the experiment results obtained from the \ac{HLF} platform. Hence, the assumption that the consensus latency follows a Gamma distribution is reasonable. Each \ac{AoI} violation probability shows the different decay rates because the estimated parameters and $\zeta$ are different. It can be seen that the \ac{AoI} violation probability at $\zeta = 0.8$ is larger than the one at $\zeta = 0.4$ for low $v$. However, the result at $\zeta = 0.8$ shows the lowest \ac{AoI} violation probability for high target \ac{AoI}. Since the skewness of a Gamma distribution becomes reduced as the shape parameter increases, the mass of the distribution of consensus latency for $\zeta = 0.8$ tends to be concentrated on more right direction than others. Therefore, the result of $\zeta=0.8$ shows the steepest decay rate for increasing $v$. Data freshness in \ac{HeMN} can be more reliably guaranteed as $v$ increases for higher $\zeta$.

\begin{figure}[t]
    \vspace{-0.05cm}
	\centering
	\begin{center}
		\psfrag{X1111111111}[bc][bc][0.7] {Target STP $\zeta$}

		\psfrag{Y1111111111}[Bl][Bl][0.59]   {$\P[\text{AoI} \geq v]$}

		\includegraphics[width=0.93\columnwidth]{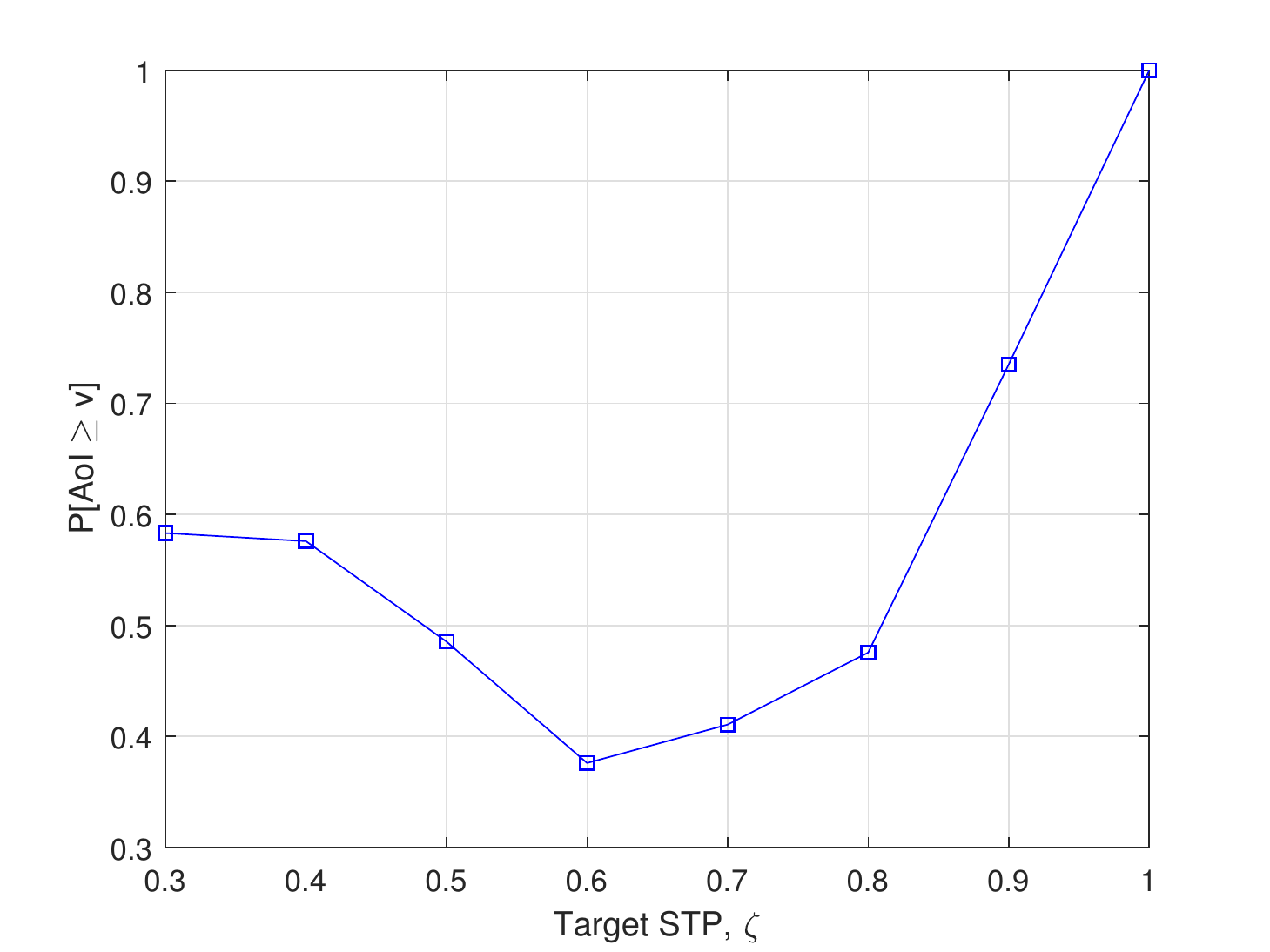}
	\end{center}
	\vspace{-0.3cm}
	\caption{\ac{AoI} violation probability versus target \ac{STP} $\zeta$ for the target \ac{AoI} $v = 4$.}
	
	\label{fig: STP}	
	\vspace{-0.6cm}
\end{figure}

Figure \ref{fig: STP} shows the \ac{AoI} violation probability for different values of the target \ac{STP}s with $v = 4$. We compute the \ac{AoI} violation probabilities at each $\zeta$ from the empirical samples. As can be seen in Fig. \ref{fig: STP},  $\zeta$  and the \ac{AoI} violation probability have a trade-off relation.
%
%
%
In specific, when $\zeta$ is small, the \ac{AoI} violation probability also becomes small due to the frequent outages of packets. As $\zeta$ increases, the \ac{AoI} violation probability decreases. Since $\zeta$ can guarantee a certain level of \ac{STP}, its increment results in more successfully received packets for the status update. However, after a certain point of $\zeta$, both $\zeta$ and the \ac{AoI} violation probability increase. This is from the fact that large $\zeta$ not only guarantees a reliable transmission but also increases transmission latency. For this reason, it becomes difficult for the consensus process to be completed before the target \ac{AoI} at high $\zeta$. Hence, the optimal target \ac{STP} $\zeta$ exists for minimizing the \ac{AoI} violation probability.
\vspace{-0.1cm}

\section{Conclusion}
In this paper, we have provided the closed-form expression of the \ac{AoI} violation probability in \ac{HeMN}. Based on the sample path analysis, we analyze the characteristics of the \ac{AoI} in \ac{HeMN} by considering both the transmission latency and consensus latency. We validate our analysis results through the simulations and the experiments after constructing the \ac{HLF} platform. The results show that our analysis results can provide the stochastic guarantee for data freshness in \ac{HeMN}. Moreover, we show that the target \ac{STP} and \ac{AoI} violation probability have a trade-off relation and high target \ac{STP} does not always guarantee data freshness. The \ac{AoI} violation probability analysis for \ac{HeMN} can be applied to \ac{AoI} sensitive applications such as temperature monitoring systems or traffic management for vehicle systems, where both fresh information and data integrity are necessary.

\vspace{-0.1cm}

\bibliographystyle{IEEEtran}
\bibliography{StringDefinitions,IEEEabrv,mybib}

\end{document}